\documentclass[reprint,amsmath,amssymb,aps,prx,floatfix,longbibliography]{revtex4-2}

\usepackage{graphicx}
\usepackage{dcolumn}
\usepackage{bm}
\usepackage{hyperref}
\usepackage{xcolor}
\usepackage{booktabs}
\usepackage{amsthm}
\hypersetup{colorlinks=true,linkcolor=blue,citecolor=blue,urlcolor=blue}

\newtheorem{theorem}{Theorem}
\newtheorem{lemma}{Lemma}

\newcommand{\ket}[1]{\left|#1\right\rangle}
\newcommand{\bra}[1]{\left\langle#1\right|}
\newcommand{\mathds}[1]{\mathbb{#1}}

\newcommand{\HSh}{H_{\mathrm{Sh}}^{\mathrm{ext}}}
\newcommand{\Hcert}{H_{\mathrm{cert}}}
\newcommand{\Rcert}{\mathcal{R}_{\mathrm{cert}}}
\newcommand{\meff}{\mu_{\mathrm{eff}}}
\newcommand{\dKL}{D_{\mathrm{KL}}}

\begin{document}

\title{Squeezed-state semi-device-independent quantum randomness generation}

\author{Hamid Tebyanian}
\affiliation{School of Physical and Chemical Sciences, Queen Mary University of London, Mile End Road, London E1 4NS, United Kingdom}

\date{\today}

\begin{abstract}
This paper investigates semi-device-independent quantum randomness generation with a trusted binary pure-state source and an untrusted binary detector whose side information is classical. We derive a closed-form Shannon-rate expression for this setting, depending only on the trusted Gram overlap of the two source states and the observed symmetric error probability. The key point is that the full binary-qubit POVM optimisation must include the two deterministic extreme points omitted by the projective-only treatment; including them gives a substantially lower, and correct, certified rate.
The closed form is an unconditional upper bound on the certified asymptotic i.i.d.\ Shannon rate, and becomes tight on a numerically verified dual-feasibility region containing all operating points used in the paper. Outside this region the same expression remains an upper bound. We then apply the result to squeezed-coherent BPSK sources, showing how squeezing changes the trade-off between state distinguishability and certified randomness in the lossless and lossy regimes. Finally, we clarify the adversary model if the adversary is allowed to hold a detector-purification register that tags the outcome.
\end{abstract}

\maketitle

\section{Introduction}

A practical quantum random number generator (QRNG) must combine a generation rate suitable for cryptographic use with a security argument that remains valid when part of the device is untrusted. Fully device-independent QRNGs~\cite{Pironio2010, Liu2018, Bierhorst2018} achieve this in principle, but they need loophole-free Bell violations and so stay limited to kHz-class rates. By contrast, fully trusted hardware QRNGs~\cite{HerreroCollantes2017, Mannalath2023, Moradpour2025} can reach 100-Gbit/s-class speeds~\cite{Bruynsteen2023}, but rely on hardware certification by the manufacturer.

Semi-device-independent (semi-DI) QRNG protocols occupy this intermediate regime by replacing full device trust with explicit physical assumptions, such as a dimension bound, an energy bound, a trusted overlap/indistinguishability constraint, or a trusted measurement model~\cite{Brask2017, VanHimbeeck2017, Rusca2020, Drahi2020, Foletto2021, Tebyanian2021, Tebyanian2024, Bhavsar2026, Genzini2026, Tebyanian2024TimeBin}. A related but distinct source-device-independent line instead treats the source as untrusted while relying on a characterised measurement stage, for example in homodyne or heterodyne continuous-variable QRNGs~\cite{Marangon2017, Avesani2018, Smith2019, Avesani2021, Cizauskas2026}. The present work belongs to the prepare-and-measure semi-DI class: the source is trusted only through the Gram overlap of a binary pure-state pair, while the binary detector is untrusted and may depend on a classical hidden variable $\lambda$ held by the adversary. Under this specific model, the asymptotic Shannon randomness of the user's output is a linear program (LP) over binary POVMs acting on the source two-dimensional subspace, with constraints fixed by the observed marginals.

Although this LP has been solved in closed form for the strictly projective subset of binary qubit POVMs~\cite{VanHimbeeck2017, Tebyanian2021}, the full binary-qubit POVM convex set also contains two rank-deficient extreme points, namely the deterministic POVMs $M_0 = 0$ and $M_0 = I$; these deterministic components contribute zero conditional entropy and shift the feasible set once the marginal-matching constraints are imposed, so restricting the optimisation to projective components overestimates the certified rate. The overestimate is large in the rate-positive interior of the $(\delta, q)$ plane: at $\delta = 0.5$, $q = 0.15$ the projective-only formula returns $0.25$ bits whereas the correct certified rate is $0.06$ bits.

Here we solve the extended binary-qubit POVM optimisation for this prepare-and-measure semi-DI model, including the deterministic POVMs omitted by the projective-only treatment. This gives a closed-form Shannon-rate expression that is an unconditional upper bound in the full feasible region and is tight on a numerically verified dual-feasibility region containing all operating points used below. We then apply the result to squeezed-coherent BPSK sources, where squeezing improves state distinguishability but can reduce certified randomness, leading to a direct trade-off in both the lossless and lossy regimes.

Compared with earlier semi-DI QRNG protocols based on unambiguous state discrimination, energy or dimension bounds, and numerical source-indistinguishability analyses~\cite{Brask2017, VanHimbeeck2017, Rusca2020, Drahi2020, Foletto2021, Tebyanian2021, Tebyanian2024, Bhavsar2026, Genzini2026}, the contribution here is a closed-form Shannon-rate analysis for a squeezed-coherent BPSK source under a trusted Gram-overlap assumption and an untrusted binary detector. This is, to our knowledge, the first semi-DI squeezed-state QRNG rate formula in this trust model. The improvement over the projective-only treatment is that the full binary-qubit POVM optimisation is solved, including the deterministic POVMs $M_0=0$ and $M_0=I$; these deterministic POVMs change the adversarial decomposition and can substantially lower the certified rate, as shown by the factor-four correction in Fig.~\ref{fig:validation}. This separates the present result from source-device-independent homodyne and heterodyne QRNGs, which use the opposite trust split and certify randomness through a characterised measurement stage~\cite{Marangon2017, Avesani2018, Smith2019, Avesani2021, Cizauskas2026}.


\section{protocol}\label{sec:setup}

The user holds a binary input $X \in \{0, 1\}$ with prior $P(X = 0) = P(X = 1) = 1/2$ and prepares, for each value of $X$, a pure state $\ket{\psi_X}$ on a single optical mode; the two states span a two-dimensional subspace, which we identify with $\mathbb{C}^2$ for the rest of the analysis, and their Gram overlap $\delta := \langle\psi_0 | \psi_1\rangle$ is taken real and positive after a global phase choice. This overlap is the only source parameter entering the rate formula, while the binary detector is untrusted and is modelled as a classical mixture of POVMs $\{M_{0|\lambda}, M_{1|\lambda}\}$ indexed by a hidden variable $\lambda$ with distribution $p_\lambda$, which is held by the adversary.

Let $q := P(B = 1 \mid X = 0)$; under symmetric BPSK $P(B = 0 \mid X = 1) = q$ also holds, so $q$ is the single-symbol error probability on either input. In an implementation the two marginals must be estimated separately; the single-parameter model applies after verifying this symmetry, otherwise the corresponding two-error LP has to be used. The LP marginal constraints are
\begin{equation}
  \sum_\lambda p_\lambda\,\langle\psi_0 | M_{0 | \lambda} | \psi_0\rangle = 1 - q,
  \quad
  \sum_\lambda p_\lambda\,\langle\psi_1 | M_{0 | \lambda} | \psi_1\rangle = q.
  \label{eq:marginals}
\end{equation}

The asymptotic i.i.d.\ Shannon randomness of $B$ against the classical-$\lambda$ adversary is
\begin{equation}
  \Hcert(\delta, q) :=
  \min_{\{p_\lambda, M_{b|\lambda}\}}\
  \sum_\lambda p_\lambda\,h\!\bigl(\langle\psi_0 | M_{0 | \lambda} | \psi_0\rangle\bigr),
  \label{eq:Hcert-def}
\end{equation}
subject to Eq.~\eqref{eq:marginals}, with $h(p) = -p \log_2 p - (1-p)\log_2(1-p)$ the binary entropy in bits. Appendix~\ref{app:threat} writes Eq.~\eqref{eq:Hcert-def} as a classical conditional entropy under the classical-$\lambda$ adversary, and shows that the detector-purification adversary that holds an orthogonal outcome-tagging environment register reduces the certified rate to zero. Operationally, Eq.~\eqref{eq:Hcert-def} certifies the entropy of the generation input $X=0$, while the $X=1$ marginal is used as a constraint on Eve's feasible decompositions. For a protocol that extracts randomness from both inputs, the relevant quantity would instead be the averaged optimisation $\Hcert^{\rm av}(\delta,q)=H(B|X,\Lambda)$, namely
\begin{equation}
\begin{aligned}
  & \Hcert^{\rm av}(\delta,q) \\:= &\min_{\{p_\lambda,M_{b|\lambda}\}}
  \sum_\lambda p_\lambda
  \frac{
  h\!\bigl(\langle\psi_0|M_{0|\lambda}|\psi_0\rangle\bigr)
  +
  h\!\bigl(\langle\psi_1|M_{0|\lambda}|\psi_1\rangle\bigr)
  }{2},
 \end{aligned}
\end{equation}
subject to the same marginal constraints. This is a different optimisation from Eq.~\eqref{eq:Hcert-def} and is not evaluated here. The closed form derived below is therefore a randomness guarantee against the classical-$\lambda$ adversary only; the classical nature of the side information is an explicit semi-DI trust assumption, not a property certified by the observed marginals.

The closed form depends on the source only through $\delta$, so any trusted pure-state pair with this overlap admits the same rate formula; the squeezed-coherent BPSK source is used here as the concrete optical realisation that fixes the operating points.

The two states are $\ket{\psi_x} = D\!\bigl((-1)^x \sqrt N\bigr)\,S(r)\,\ket 0$ for $x \in \{0, 1\}$, with $D(\alpha)$ the displacement operator, $S(r) = \exp\!\left[\tfrac{r}{2}(a^2 - a^{\dagger 2})\right]$ the single-mode squeezing operator, $N \ge 0$ the squared displacement amplitude, $r \ge 0$ the squeezing parameter, and the squeezing direction aligned with the displacement axis~\cite{Yuen1976,Caves1981,Weedbrook2012}. Using $S^\dagger(r) D(\alpha) S(r) = D(\alpha e^{r})$,
\begin{equation}
  \delta = \langle\psi_0|\psi_1\rangle = e^{-2 N e^{2r}},
  \label{eq:overlap}
\end{equation}
and the mean photon number of each branch is $\bar n = N + \sinh^2 r$.
For a homodyne-style binary detector of efficiency $\eta$ that bins the squeezed-quadrature outcome at zero, the variance of the quadrature is $\sigma^2 = \tfrac{1}{2}(\eta e^{-2r} + 1 - \eta)$ and the honest error probability is
\begin{equation}
  q_{\rm hon}(N, r, \eta) = \tfrac{1}{2}\bigl(1 - \mathrm{erf}\sqrt{2\,\meff^{\rm loss}}\bigr),\;
  \meff^{\rm loss} := \frac{\eta N}{\eta e^{-2r} + 1 - \eta}.
  \label{eq:q-honest-lossy}
\end{equation}
At $\eta = 1$ this reduces to $q_{\rm hon}(N, r) = \tfrac{1}{2}(1 - \mathrm{erf}\sqrt{2 N e^{2r}})$. Equations~\eqref{eq:overlap} and~\eqref{eq:q-honest-lossy} assume that the displacement and squeezing axes are phase-aligned. If the squeezing axis is misaligned from the displacement axis by an angle $\phi$, the overlap exponent is modified to
\begin{equation}
  \delta(\phi)
  =
  \exp\!\left[-2N\left(e^{2r}\cos^2\phi+e^{-2r}\sin^2\phi\right)\right],
\end{equation}
so phase mismatch weakens the distinguishability gain from squeezing and must be included in the trusted value of $\delta$. The closed-form theorem itself is unchanged once the correct trusted $\delta$ and observed $q$ are used. Equation~\eqref{eq:q-honest-lossy} is a prediction for the honest device, not a security claim about a device of internal efficiency $\eta$; the bound of Theorem~\ref{thm:main} applies to any observed $q$ in the strong-feasibility interval and is the certified rate inside $\Rcert$, while Eq.~\eqref{eq:q-honest-lossy} is used only to locate honest operating points on the rate surface.

\section{Closed-form certified rate}\label{sec:closed-form}

The optimisation of Eq.~\eqref{eq:Hcert-def} is a semi-infinite LP over the convex set of classical-$\lambda$ binary POVMs. On the source two-dimensional subspace, the binary qubit POVM convex set has three kinds of extreme points: the rank-one projectives $\Pi_\theta$ parametrised by $\theta \in [0, \pi)$, the deterministic POVM $M_0 = 0$ with moment $(\langle\psi_0|M_0|\psi_0\rangle, \langle\psi_1|M_0|\psi_1\rangle) = (0, 0)$, and the deterministic POVM $M_0 = I$ with moment $(1, 1)$. The projective component at angle $\theta$ contributes moment $(\cos^2\theta, \cos^2(\theta - \theta_*))$ and entropy $h(\cos^2\theta)$, with $\theta_* := \arccos\delta$. Both deterministic endpoints have zero entropy~\cite{Busch2016}. These are the only extreme points: after diagonalising a binary qubit effect $0\le M_0\le I$, any eigenvalue strictly inside $(0,1)$ can be perturbed in both directions while staying inside the effect interval, so extremality requires each eigenvalue to be either $0$ or $1$. The possible extreme effects are therefore $0$, $I$, and the rank-one projections. Because $h$ is concave, any non-extreme binary POVM element can be written as a convex combination of these components, and splitting $\lambda$ accordingly does not raise the objective, since $h\bigl(\sum_k c_k x_k\bigr) \ge \sum_k c_k h(x_k)$~\cite{CoverThomas2006}. Thus the minimisation in Eq.~\eqref{eq:Hcert-def} loses nothing by restricting each $M_{b|\lambda}$ to an extreme point, which is why the problem reduces to an LP over distributions on the projective and deterministic endpoints.

The strong-feasibility argument of Appendix~\ref{app:strong-feas} shows that the LP is feasible for $q$ in the symmetric interval $[q_b, 1 - q_b]$ with
\begin{equation}
  q_b := \tfrac{1}{2}\bigl(1 - \sqrt{1 - \delta^2}\bigr).
  \label{eq:qb}
\end{equation}
Inside this interval, the rate-positive part is $(q_b, q_z) \cup (1 - q_z, 1 - q_b)$ with
\begin{equation}
  q_z := \frac{\delta^2}{1 + \delta^2},
  \label{eq:qz}
\end{equation}
and the rate-zero plateau is $q \in [q_z, 1 - q_z]$, which is non-empty because $q_z<1/2$ for every $\delta\in(0,1)$. The two rate-positive segments are equivalent under $q \leftrightarrow 1 - q$ (the BPSK relabelling $X \leftrightarrow 1 - X$); we work with $q \in (q_b, q_z)$ throughout.

\begin{figure}
  \centering
  \includegraphics[width=\columnwidth]{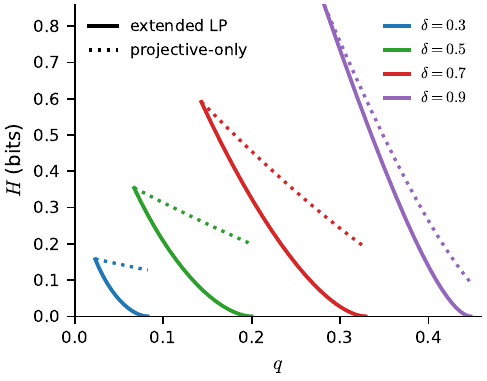}
  \caption{Extended-LP closed form $\HSh = r\cdot h(p)$ (solid) and projective-only LP value $h(p_{\rm proj})$ (dotted), on the rate-positive interval $q \in (q_b, q_z)$ at $\delta \in \{0.3, 0.5, 0.7, 0.9\}$.}
  \label{fig:validation}
\end{figure}
\begin{theorem}[Certified Shannon rate, extended LP]\label{thm:main}
Let $\delta \in (0, 1)$ and $q \in [q_b, 1 - q_b]$ with $q_b, q_z$ from Eqs.~\eqref{eq:qb}, \eqref{eq:qz}. Define
\begin{align}
  r(\delta, q) &:= \frac{1 - 2\delta\sqrt{q(1-q)}}{1 - \delta^2}, \label{eq:r-def}\\
  p(\delta, q) &:= \frac{(1-q)(1-\delta^2)}{1 - 2\delta\sqrt{q(1-q)}}. \label{eq:p-def}
\end{align}
\begin{equation}
  \HSh(\delta, q) := \begin{cases}
    r(\delta, q)\,h\bigl(p(\delta, q)\bigr) & q \in [q_b, q_z),\\
    0 & q \in [q_z, 1 - q_z],\\
    r(\delta, 1-q)\,h\bigl(p(\delta, 1-q)\bigr) & q \in (1 - q_z, 1 - q_b].
  \end{cases}
  \label{eq:HSh-def}
\end{equation}
Unconditionally, $\Hcert(\delta,q)\le \HSh(\delta,q)$. This upper bound is realised as a feasible primal value by the explicit Eve strategy of Lemma~\ref{lem:primal} on the rate-positive interior, and by a three-component zero-entropy decomposition on the rate-zero plateau. Equality holds on the rate-zero plateau and on the dual-feasibility region
\begin{equation}
\Rcert := \bigl\{(\delta, q):\ L_{\delta, q}(\theta) \le h(\cos^2\theta)\ \forall\,\theta \in [0, \pi)\bigr\},
\end{equation}
where $L_{\delta, q}(\theta)=\alpha\cos^2\theta+\beta\cos^2(\theta-\theta_*)+\gamma$ is the candidate dual witness of Lemma~\ref{lem:dual}. Thus $\Hcert(\delta,q)=\HSh(\delta,q)$ for every $(\delta,q)\in\Rcert$. The global dual inequality defining $\Rcert$ is verified numerically in Appendix~\ref{app:certified-region}, with the analytic local condition stated there as a necessary first-order requirement. Every tabulated and optimised operating point used for certified-rate claims in this paper lies in $\Rcert$ at the numerical precision reported.
\end{theorem}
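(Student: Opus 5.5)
The plan is to treat Eq.~\eqref{eq:Hcert-def}, after the reduction to extreme points described above, as an LP over probability measures on the moment curve. The curve consists of $(\cos^2\theta,\cos^2(\theta-\theta_*))$ with cost $h(\cos^2\theta)$, together with the two zero-cost points $(0,0)$ and $(1,1)$. The constraints are normalisation plus the two marginals of Eq.~\eqref{eq:marginals}. The claimed value will come from a primal point, which gives the upper bound, and the lower bound on $\Rcert$ will come from a dual point.

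\textbf{Upper bound.} On $q\in[q_b,q_z)$ I would exhibit the Eve strategy of Lemma~\ref{lem:primal}, which mixes three components. The first is a single projective component $\Pi_{\theta_0}$ with weight $r(\delta,q)$ and $\cos^2\theta_0=p(\delta,q)$. The other two are the deterministic POVMs $M_0=0$ and $M_0=I$, carrying the remaining weight $1-r$. The check is algebraic: choose the weights $w_0$ and $w_I$ so that the two marginal equations hold. The second moment $\cos^2(\theta_0-\theta_*)$ should be expressed through $\cos\theta_0\cos\theta_*+\sin\theta_0\sin\theta_*$, with $\cos\theta_*=\delta$. The requirement that the projective moment lie on the line through $(q\text{-data})$ and the segment joining $(0,0)$ and $(1,1)$ then fixes $r$ and $p$ as in Eqs.~\eqref{eq:r-def}--\eqref{eq:p-def}. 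I would also verify $0\le w_0,w_I$ and $r\le1$. Since $1-2\delta\sqrt{q(1-q)}\le 1-\delta^2$ fails exactly beyond $q_b$, this condition is where the interval $[q_b,q_z)$ enters. The objective equals $r\,h(p)$.

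On the plateau $[q_z,1-q_z]$, I would write the target moment $(1-q,q)$ as a convex combination of three zero-entropy points: $(0,0)$, $(1,1)$, and the $\theta=\pi/2$ projective moment $(0,\sin^2\theta_*)=(0,1-\delta^2)$, or its partner at $\theta=0$, namely $(1,\delta^2)$. The boundary $q_z=\delta^2/(1+\delta^2)$ should be precisely where $(1-q,q)$ leaves the triangle spanned by $(0,0)$, $(1,1)$ and $(1,\delta^2)$. The symmetric segment follows by the relabelling $q\leftrightarrow 1-q$. Together these cases give $\Hcert\le\HSh$ unconditionally.

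\textbf{Lower bound.} I would use LP weak duality. For any $(\alpha,\beta,\gamma)$ such that the affine function $\alpha x+\beta y+\gamma$ lies below the cost at every extreme point, we get $\Hcert\ge \alpha(1-q)+\beta q+\gamma$. The conditions at the two deterministic points are $\gamma\le0$ and $\alpha+\beta+\gamma\le0$. The condition on the projective curve is $L_{\delta,q}(\theta)\le h(\cos^2\theta)$. I would take the witness of Lemma~\ref{lem:dual}, built by complementary slackness: make both deterministic inequalities tight, so $\gamma=0$ and $\alpha+\beta=0$, and make $L$ tangent to $h(\cos^2\theta)$ at $\theta_0$. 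These choices make the dual value equal to the primal value $r\,h(p)$. On $\Rcert$ the global inequality holds by definition, so $\Hcert=\HSh$ there. On the plateau the zero witness $\alpha=\beta=\gamma=0$ is trivially feasible, which gives equality.

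\textbf{Main obstacle.} The hard step is proving that the tangent witness stays below the entropy curve for all $\theta$, rather than only locally, since $h(\cos^2\theta)$ is not concave in $\theta$. This is why the statement delegates global dual feasibility to the numerical verification of Appendix~\ref{app:certified-region}. I would establish analytically only the first- and second-order tangency conditions at $\theta_0$. I would then check the global inequality on a grid, including every operating point used in the paper, rather than attempt a full analytic proof.
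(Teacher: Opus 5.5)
Your primal side matches the paper's. The dual witness you describe, however, does not exist. You ask for $\gamma=0$ and $\alpha+\beta=0$, so that both deterministic constraints are tight, together with value and slope tangency at $\theta_0$. That is four linear conditions on three unknowns. The three conditions $\gamma=0$, Eq.~\eqref{eq:dual-val} and Eq.~\eqref{eq:dual-slope} already fix $(\alpha,\beta)$ uniquely, because $D=-2\sin\theta_*\cos\theta_0\cos(\theta_0-\theta_*)\neq0$. The resulting $\alpha+\beta$ is strictly negative on the whole rate-positive interior; at $\delta=0.5$, $q=0.15$ one gets $\alpha\approx0.38$, $\beta\approx-1.75$. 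If you keep $\alpha+\beta=0$ and drop slope tangency instead, $h(\cos^2\theta)-L_{\delta,q}(\theta)$ has a simple zero at the interior point $\theta_0$ and changes sign there, so that witness is infeasible.

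The error is in the complementary-slackness pattern. Once $r\cos^2\theta_0=1-q$ and $r\cos^2(\theta_0-\theta_*)=q$, the marginal equations force zero weight on $M_0=I$. The optimal Eve is therefore supported only on $\Pi_{\theta_0}$ and $M_0=0$, and only the $(0,0)$ constraint should be tight. Indeed, the dual value $\alpha(1-q)+\beta q=r\,h(p)$ follows from $\gamma=0$ and the value condition alone; tightness at $(1,1)$ plays no role.

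This leaves a genuine gap in your lower bound. $\Rcert$ is defined only by the inequality on the projective curve. Dual feasibility at the deterministic point $(1,1)$, namely $\alpha+\beta\le0$, is therefore an additional fact that must be proved, and your plan assumes it rather than proving it. The paper proves it analytically in Appendix~\ref{app:slack-sign}. Cramer's rule gives $\alpha+\beta=F/D$, with $D<0$ from the branch signs and
$F=-2\sin\theta_*\bigl[\delta G(p)-2\sqrt{(1-\delta^2)p(1-p)}\,\mathcal H(p)\bigr]>0$.
The sign of $F$ follows because $G(p)\ln2=-p\ln p+(1-p)\ln(1-p)$ is negative on $(1/2,1)$: it is strictly convex there and vanishes at both ends. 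Meanwhile $\mathcal H(p)\ln2=-\tfrac12\ln\bigl(p(1-p)\bigr)>0$.

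Two smaller corrections on the primal side. First, what fixes $\theta_0$ is that the observation lies on the chord from the projective moment to $(0,0)$ specifically, not collinearity with some point of the diagonal. Second, $r\le1$ only yields $q\ge q_b$. The condition $q<q_z$ enters through the branch $\theta_0\in(\pi/2,\pi)$, where $r\cos^2(\theta_0-\theta_*)=q$ holds exactly when $\sqrt q\le\delta\sqrt{1-q}$.
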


The closed-form rate is invariant under $q \leftrightarrow 1 - q$, because $h(p) = h(1 - p)$ and the LP is invariant under Eve's outcome relabelling $B \leftrightarrow 1 - B$; we therefore fix the convention $q \in (q_b, q_z) \subset (0, 1/2)$, in which case $p \in (1 - q_b, 1)$ takes values close to $1$ on the rate-positive interior, with $p \to 1$ as $q \to q_z^-$ and $p \to 1 - q_b$ as $q \to q_b^+$.

Figure~\ref{fig:validation} compares the corrected closed form with the projective-only value $h(p_{\rm proj})$ across $\delta \in \{0.3, 0.5, 0.7, 0.9\}$, showing for example that at $\delta = 0.5$ and $q = 0.15$, close to the plateau edge $q_z = 0.2$, the projective-only value is about four times the corrected one. Figure~\ref{fig:contour} maps $\HSh$ on the rate-positive interior of the $(\delta, q)$ plane.

\begin{figure}
  \centering
  \includegraphics[width=\columnwidth]{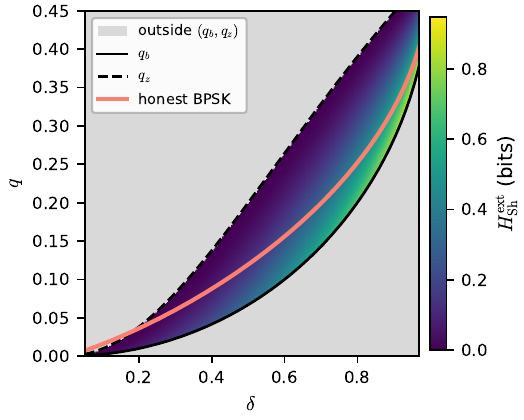}
  \caption{$\HSh(\delta, q)$ on the rate-positive interior $q \in (q_b, q_z)$. Curves: $q_b$ (solid black), $q_z$ (dashed black), honest BPSK $q = (1 - \mathrm{erf}\sqrt{2\meff})/2$ at $\delta = e^{-2\meff}$ (salmon). Gray: outside the displayed positive-rate branch; below $q_b$ the observation is outside the feasible moment region, while above $q_z$ the deterministic/projective zero-cost decomposition gives the rate-zero plateau.}
  \label{fig:contour}
\end{figure}

\subsection{Primal construction}

\begin{lemma}[Primal Eve strategy]\label{lem:primal}
For $(\delta, q)$ with $q \in (q_b, q_z)$, let $r$ and $p$ be as in Eqs.~\eqref{eq:r-def}, \eqref{eq:p-def}, and set
\begin{equation}
  \theta_0 := \pi - \arccos\sqrt{p} \in (\pi/2, \pi),\qquad w_0 := 1 - r.
  \label{eq:theta-0}
\end{equation}
Then $0 \le p \le 1$, $0 \le r \le 1$, and the binary POVM
\begin{equation}
  M_b = r\,\Pi_{\theta_0}^{(b)} + w_0\,\mathds{1}_{(0,0)}^{(b)},
  \qquad b \in \{0, 1\},
\end{equation}
with $\Pi_{\theta_0}^{(0)}$ the rank-one projection onto the state at angle $\theta_0$, $\Pi_{\theta_0}^{(1)} = I - \Pi_{\theta_0}^{(0)}$, $\mathds{1}_{(0,0)}^{(0)} = 0$, $\mathds{1}_{(0,0)}^{(1)} = I$, satisfies Eq.~\eqref{eq:marginals} term by term and has single-symbol conditional entropy $r\cdot h(p) = \HSh(\delta, q)$. Hence $\Hcert(\delta, q) \le \HSh(\delta, q)$.

For $q \in [q_z, 1/2]$ on the rate-zero plateau, the three-component decomposition
\begin{equation}
  w_p = \frac{1 - 2q}{1 - \delta^2},\
  w_1 = \frac{q(1+\delta^2) - \delta^2}{1 - \delta^2},\
  w_0 = q,
  \label{eq:plateau-weights}
\end{equation}
over the projective at $\theta = 0$ (moment $(1, \delta^2)$) and the deterministic endpoints $(1, 1)$ and $(0, 0)$ has $w_p, w_1, w_0 \ge 0$, $w_p + w_1 + w_0 = 1$, reproduces the marginals $(1 - q, q)$, and has zero entropy on every component. For $q \in [1/2, 1 - q_z]$ the mirrored decomposition
\begin{equation}
  w_p' = \frac{2q - 1}{1 - \delta^2},\
  w_1' = 1 - q,\
  w_0' = \frac{1 - q(1+\delta^2)}{1 - \delta^2},
  \label{eq:plateau-weights-mirror}
\end{equation}
over the projective at $\theta = \pi/2$ (moment $(0, 1 - \delta^2)$) and the same two deterministic endpoints also has non-negative weights summing to one, reproduces $(1 - q, q)$, and has zero entropy. Since conditional Shannon entropy is non-negative, this gives $\Hcert(\delta,q)=0=\HSh(\delta,q)$ on the entire plateau.
\end{lemma}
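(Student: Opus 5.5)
The plan is a direct verification. The lemma asserts that certain explicit decompositions are feasible for the LP of Eq.~\eqref{eq:Hcert-def} and computes their objective values. Each claim is either an elementary inequality in $(\delta,q)$ or a linear identity, and the bound $\Hcert\le\HSh$ follows because $\Hcert$ is a minimum over feasible decompositions. I will work with $D:=1-2\delta\sqrt{q(1-q)}$, so that $r=D/(1-\delta^2)$ and $p=(1-q)(1-\delta^2)/D$.

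First I check the ranges of $p$ and $r$.
\begin{itemize}
\item Since $2\sqrt{q(1-q)}\le1$ and $\delta<1$, we have $D>0$, hence $r>0$ and $p\ge0$.
\item The condition $r\le1$ is equivalent to $2\delta\sqrt{q(1-q)}\ge\delta^2$, i.e.\ $4q(1-q)\ge\delta^2$. This holds precisely for $q\in[q_b,1-q_b]$ by Eq.~\eqref{eq:qb}, so $w_0=1-r\ge0$.
\item For $p\le1$, expand $D-(1-q)(1-\delta^2)=q-2\delta\sqrt{q(1-q)}+\delta^2(1-q)$. This gives
\begin{equation}
1-p=\frac{\bigl(\sqrt q-\delta\sqrt{1-q}\bigr)^2}{D}\ \ge 0 .
\end{equation}
\end{itemize}

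Next I check the marginal constraints. The projective component at $\theta_0$ has moment $(\cos^2\theta_0,\cos^2(\theta_0-\theta_*))$, and the $M_0=0$ component has moment $(0,0)$ and zero entropy. Since $\cos^2\theta_0=p$, the first marginal is $rp=(1-q)$, as required. The second marginal is the step I expect to need care, because it depends on the choice of branch in $\theta_0$. I write
\begin{equation}
\cos(\theta_0-\theta_*)=\delta\cos\theta_0+\sqrt{1-\delta^2}\,\sin\theta_0,
\end{equation}
with $\cos\theta_0=-\sqrt p$ and $\sin\theta_0=+\sqrt{1-p}$, since $\theta_0\in(\pi/2,\pi)$. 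The sign of $\sqrt{1-p}$ relative to $\sqrt q-\delta\sqrt{1-q}$ is fixed by $q<q_z$. Indeed $q<\delta^2/(1+\delta^2)$ is equivalent to $\sqrt q<\delta\sqrt{1-q}$, so $\sqrt{1-p}=(\delta\sqrt{1-q}-\sqrt q)/\sqrt D$. Substituting, the $\delta\sqrt{1-q}$ terms cancel and
\begin{equation}
\cos(\theta_0-\theta_*)=-\sqrt{1-\delta^2}\,\sqrt q/\sqrt D .
\end{equation}
Squaring gives $q(1-\delta^2)/D=q/r$, so the second marginal is $r\cdot q/r=q$. This is where the hypothesis $q<q_z$ is used; the other branch of $\theta_0$ would not reproduce $q$. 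The objective is then $r\,h(p)+w_0\cdot0=r\,h(p)$, which equals $\HSh$ on $[q_b,q_z)$. Since $\Hcert$ is a minimum over feasible decompositions, $\Hcert\le\HSh$ there.

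For the plateau $q\in[q_z,1/2]$ I verify the weights in Eq.~\eqref{eq:plateau-weights} directly.
\begin{itemize}
\item $w_p\ge0$ because $q\le1/2$.
\item $w_1\ge0$ is exactly $q\ge q_z$.
\item $w_0=q\ge0$ trivially.
\item The sum is $\bigl[1-2q+q+q\delta^2-\delta^2+q-q\delta^2\bigr]/(1-\delta^2)=1$.
\item First moment: $w_p+w_1=1-q$.
\item Second moment: $w_p\delta^2+w_1=q$, which is a linear identity.
\end{itemize}
The mirrored case $q\in[1/2,1-q_z]$ follows in the same way from Eq.~\eqref{eq:plateau-weights-mirror}, or from the $q\leftrightarrow1-q$ relabelling symmetry. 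Every component is either deterministic or the projective at $\theta\in\{0,\pi/2\}$, where $h(1)=h(0)=0$, so the objective is $0$. Combined with the non-negativity of $h$, this gives $\Hcert=0=\HSh$ on the whole plateau.
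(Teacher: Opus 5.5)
Your proposal is correct and follows the same direct-verification route as the paper: the same perfect-square identity for $1-p$, the same reduction of $r\le 1$ to $4q(1-q)\ge\delta^2$, the branch $\cos\theta_0=-\sqrt p$, $\sin\theta_0=\sqrt{1-p}$, and linear checks of the plateau weights. Your explicit evaluation $\cos(\theta_0-\theta_*)=-\sqrt{1-\delta^2}\,\sqrt q\big/\sqrt{1-2\delta\sqrt{q(1-q)}}$ uses $q<q_z\iff\sqrt q<\delta\sqrt{1-q}$ to fix the sign of $\sqrt{1-p}$; this spells out the ``short calculation'' the paper omits and makes clear where the hypothesis $q<q_z$ enters.
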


\begin{proof}
Non-negativity of $p$ requires $1 - 2\delta\sqrt{q(1-q)} > 0$, automatic from $\sqrt{q(1-q)} \le 1/2 < 1/(2\delta)$ for any $\delta < 1$. The upper bound $p\le1$ follows from 
\[ 
1-p=\frac{\bigl(\sqrt q-\delta\sqrt{1-q}\bigr)^2}{1-2\delta\sqrt{q(1-q)}}\ge0 . 
\] 
Non-negativity of $w_0 = 1 - r$ requires $r \le 1$, equivalent to $\delta \le 2\sqrt{q(1-q)}$; squaring gives $4q(1-q)\ge\delta^2$, or $|1-2q|\le\sqrt{1-\delta^2}$, exactly the BPSK chord condition derived in Appendix~\ref{app:strong-feas}. The branch choice $\theta_0 = \pi - \arccos\sqrt{p}$ in Eq.~\eqref{eq:theta-0} corresponds to $\cos\theta_0 = -\sqrt{p}$ and $\sin\theta_0 = \sqrt{1 - p}$.

For the marginals: $\cos^2\theta_0 = p$, so $r\cdot p = (1-q)$ by direct algebra on Eqs.~\eqref{eq:r-def}, \eqref{eq:p-def}. For the second marginal,
\begin{equation}
\begin{aligned}
  &\cos(\theta_0 - \theta_*) =\cos\theta_0\cos\theta_* + \sin\theta_0\sin\theta_*
  \\ &= -\delta\sqrt p + \sqrt{(1-p)(1-\delta^2)},
\end{aligned}
\end{equation}
which gives
\begin{equation}
  \cos^2(\theta_0 - \theta_*) = \delta^2 p - 2\delta\sqrt{p(1-p)(1-\delta^2)} + (1 - p)(1 - \delta^2).
\end{equation}
Substituting Eq.~\eqref{eq:p-def} into the right-hand side gives, after a short calculation, $r\cdot\cos^2(\theta_0 - \theta_*) = q$. Hence the strategy realises Eq.~\eqref{eq:marginals}. The branch $\theta_0 \in (0, \pi/2)$, by contrast, gives $\cos(\theta_0 - \theta_*) = +\delta\sqrt p + \sqrt{(1-p)(1-\delta^2)}$ and $r\cdot\cos^2(\theta_0 - \theta_*) \ne q$ for $q < 1/2$; the correct active branch is $(\pi/2, \pi)$.

The single-symbol conditional entropy of the strategy is $r\cdot h(\cos^2\theta_0) + w_0\cdot h(0) = r\cdot h(p)$, since $h(0) = 0$ on the deterministic endpoint $(0, 0)$.

For the rate-zero plateau, the weights of Eq.~\eqref{eq:plateau-weights} are non-negative on $q \in [q_z, 1/2]$: $w_p \ge 0$ since $q \le 1/2$; $w_1 \ge 0$ iff $q \ge \delta^2/(1+\delta^2) = q_z$; $w_0 = q \ge 0$ always. They sum to one by direct algebra. The components' moments $(1, \delta^2)$, $(1, 1)$, $(0, 0)$ combine to give marginals
\begin{align}
  u &= w_p\cdot 1 + w_1\cdot 1 + w_0\cdot 0 = 1 - q,\\
  v &= w_p\cdot\delta^2 + w_1\cdot 1 + w_0\cdot 0 = q.
\end{align}
Each component has zero entropy: the projective at $\theta = 0$ has $h(\cos^2 0) = h(1) = 0$, and the deterministic endpoints have entropy zero by definition. For the mirror $q \in [1/2, 1 - q_z]$, the weights of Eq.~\eqref{eq:plateau-weights-mirror} satisfy the same non-negativity bounds: $w_p' \ge 0$ since $q \ge 1/2$; $w_1' \ge 0$ since $q \le 1$; $w_0' \ge 0$ iff $q \le 1/(1+\delta^2) = 1 - q_z$. The components' moments $(0, 1-\delta^2)$, $(1, 1)$, $(0, 0)$ combine to give $(u, v) = (w_1', \, w_p'(1-\delta^2) + w_1') = (1-q, q)$, again with zero entropy on every component.
\end{proof}

\subsection{Dual certificate}

\begin{lemma}[Dual witness]\label{lem:dual}
For $(\delta, q)$ with $q \in (q_b, q_z)$ and $\theta_0$ as in Eq.~\eqref{eq:theta-0}, consider the affine function
\begin{equation}
  L_{\delta, q}(\theta)
  :=
  \alpha\cos^2\theta
  + \beta\cos^2(\theta - \theta_*)
  + \gamma .
\end{equation}
The coefficients are fixed by imposing value and slope tangency at the active projective point, and by setting the affine offset so that the zero-entropy endpoint $(u,v)=(0,0)$ is saturated:
\begin{align}
  \alpha\cos^2\theta_0 + \beta\cos^2(\theta_0 - \theta_*) &= h(p),\label{eq:dual-val}\\
  \alpha\sin(2\theta_0) + \beta\sin(2(\theta_0 - \theta_*)) &= h'(p)\sin(2\theta_0),\label{eq:dual-slope}\\
  \gamma &= 0.\label{eq:dual-gamma}
\end{align}
This linear system has a unique solution on the rate-positive interior. Its determinant is
\begin{equation}
\begin{split}
  D &:= \cos^2\theta_0\,\sin\bigl(2(\theta_0 - \theta_*)\bigr)
       - \cos^2(\theta_0 - \theta_*)\,\sin(2\theta_0) \\
    &= -2\sin\theta_*\,\cos\theta_0\,\cos(\theta_0 - \theta_*),
\end{split}
  \label{eq:det-formula}
\end{equation}
which is non-zero for $q \in (q_b,q_z)$. With this choice, $L_{\delta,q}$ is tight at the active projective, has the same slope as $h(\cos^2\theta)$ at that point, and is tight at the deterministic endpoint $(0,0)$, namely
\begin{equation}
\begin{aligned}
  &L_{\delta,q}(\theta_0)=h(p),\\
  &L'_{\delta,q}(\theta_0)=-h'(p)\sin(2\theta_0),\\
  &L_{\delta,q}|_{(u,v)=(0,0)}=0 .
  \end{aligned}
\end{equation}
At the other zero-entropy endpoint $(u,v)=(1,1)$ it has non-positive slack,
\begin{equation}
  L_{\delta,q}|_{(u,v)=(1,1)}
  = \alpha+\beta+\gamma
  = \alpha+\beta
  \le 0 = h(1),
\end{equation}
with equality only in the limit $q\to q_z^-$, as proved in Appendix~\ref{app:slack-sign}. Moreover, evaluated at the observed marginals, the same affine functional gives
\begin{equation}
  \alpha(1-q)+\beta q+\gamma
  =
  r(\delta,q)\,h(p)
  =
  \HSh(\delta,q).
\end{equation}
Therefore, whenever the global pointwise inequality
\begin{equation}
  L_{\delta,q}(\theta)\le h(\cos^2\theta)
  \qquad \forall\,\theta\in[0,\pi)
\end{equation}
holds, the dual value matches the primal value of Lemma~\ref{lem:primal}. Hence weak duality gives
$\Hcert(\delta,q)=\HSh(\delta,q)$ on $\Rcert$.
\end{lemma}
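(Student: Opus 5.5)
The plan is to verify the linear-algebraic claims directly, then obtain the equality $\Hcert=\HSh$ on $\Rcert$ from a weak-duality argument over the extreme points listed at the start of Sec.~\ref{sec:closed-form}.

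\textbf{Determinant and uniqueness.} With $\gamma=0$ fixed by Eq.~\eqref{eq:dual-gamma}, Eqs.~\eqref{eq:dual-val}--\eqref{eq:dual-slope} form a $2\times2$ system in $(\alpha,\beta)$. I will expand $D$ with $\sin 2x=2\sin x\cos x$. This gives $D=2\cos\theta_0\cos(\theta_0-\theta_*)\,[\cos\theta_0\sin(\theta_0-\theta_*)-\cos(\theta_0-\theta_*)\sin\theta_0]$. The bracket equals $\sin(-\theta_*)$, which yields Eq.~\eqref{eq:det-formula}.

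The three factors are non-zero on $(q_b,q_z)$ for the following reasons. First, $\sin\theta_*=\sqrt{1-\delta^2}>0$ since $\delta<1$. Second, $\cos\theta_0=-\sqrt p\neq0$ because $p>1-q_b>0$. Third, $\cos^2(\theta_0-\theta_*)=q/r$ by Lemma~\ref{lem:primal}, and this is positive since $q>q_b>0$. Cramer's rule then gives the unique $(\alpha,\beta)$ in closed form.

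\textbf{Tangency properties and value at the marginals.} The tangency properties are immediate from the system. Since $\frac{d}{d\theta}\cos^2\theta=-\sin2\theta$, Eq.~\eqref{eq:dual-slope} is exactly $L'_{\delta,q}(\theta_0)=\frac{d}{d\theta}h(\cos^2\theta)|_{\theta_0}$. Also $\gamma=0$ is exactly tightness at $(u,v)=(0,0)$.

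The key identity at the observed marginals uses the primal relations of Lemma~\ref{lem:primal}, namely $1-q=r\cos^2\theta_0$ and $q=r\cos^2(\theta_0-\theta_*)$. Substituting these gives
\[
\alpha(1-q)+\beta q = r\bigl[\alpha\cos^2\theta_0+\beta\cos^2(\theta_0-\theta_*)\bigr]=r\,h(p),
\]
by Eq.~\eqref{eq:dual-val}.

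\textbf{Weak duality.} Take any feasible classical-$\lambda$ strategy. By the extreme-point reduction of Sec.~\ref{sec:closed-form}, I may assume each component is projective or deterministic, with moments $(u_\lambda,v_\lambda)$ satisfying $\sum_\lambda p_\lambda(u_\lambda,v_\lambda)=(1-q,q)$. Suppose $\alpha u+\beta v+\gamma\le$ entropy at every extreme point. This holds on projectives by the definition of $\Rcert$, with equality at $(0,0)$, and at $(1,1)$ by the slack sign. Then
\[
\sum_\lambda p_\lambda\, h(u_\lambda)\;\ge\;\sum_\lambda p_\lambda(\alpha u_\lambda+\beta v_\lambda+\gamma)=\alpha(1-q)+\beta q=\HSh .
\]
Combined with Lemma~\ref{lem:primal}, this gives equality on $\Rcert$.

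\textbf{Main obstacle: the slack sign $\alpha+\beta\le0$.} I would first compute $\alpha+\beta$ via Cramer as $[h(p)\,(\sin2(\theta_0-\theta_*)-\sin2\theta_0)+h'(p)\sin2\theta_0\,(\cos^2\theta_0-\cos^2(\theta_0-\theta_*))]/D$. I would then express everything in $p$ and $\delta$, using $\cos\theta_0=-\sqrt p$, $\sin\theta_0=\sqrt{1-p}$, and $\cos\theta_*=\delta$. This reduces the claim to a one-variable inequality of the form $h(p)A(p,\delta)\le h'(p)B(p,\delta)$ on $p\in(1-q_b,1)$.

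Next I would check that both sides vanish as $p\to1$, which corresponds to $q\to q_z^-$; this is the stated limiting equality. I would then show the sign by a monotonicity or derivative argument in $p$, using concavity of $h$ (i.e.\ $h(p)\le h'(p)(p-1)+h(1)$-type tangent bounds near $p=1$). If a clean analytic route fails, I would fall back on the tangent-line bound plus the explicit sign of $D$, which is positive since $\cos\theta_0<0$ and $\cos(\theta_0-\theta_*)$ has a definite sign on the active branch.
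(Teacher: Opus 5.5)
Your handling of the determinant, uniqueness, the tangency conditions, the identity $\alpha(1-q)+\beta q=r\,h(p)$ and the weak-duality chain is correct and follows the paper. Your factorisation of $D$ via $\sin 2x=2\sin x\cos x$ is shorter than the paper's sum-to-product route, and $\cos^2(\theta_0-\theta_*)=q/r>0$ is a valid non-vanishing argument.

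The gap is the slack inequality $\alpha+\beta\le0$. It is an explicit claim of the lemma on all of $(q_b,q_z)$. It is also the only thing that makes $L_{\delta,q}$ dual-feasible at the extreme point $M_0=I$ in your weak-duality step. You give only a plan, and both concrete ingredients you name are wrong.

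First, $D<0$, not $D>0$. The factor $\cos(\theta_0-\theta_*)=-\delta\sqrt p+\sqrt{(1-p)(1-\delta^2)}$ is negative iff $p>1-\delta^2$, which holds because $p>1-q_b>1-\delta^2$. Hence $D=-2\sin\theta_*\cos\theta_0\cos(\theta_0-\theta_*)<0$; numerically $D\approx-0.72$ at $\delta=0.5$, $q=0.15$. With your sign you would be trying to show that the Cramer numerator $F=h(p)[\sin 2(\theta_0-\theta_*)-\sin2\theta_0]+h'(p)\sin2\theta_0[\cos^2\theta_0-\cos^2(\theta_0-\theta_*)]$ is $\le0$, whereas in fact $F>0$.

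Second, the tangent bound you quote is reversed. Concavity gives $h(1)\le h(p)+h'(p)(1-p)$, i.e.\ $h(p)\ge h'(p)(p-1)$. The tangent at $p=1$ is vertical, so concavity supplies no upper bound on $h$ near $p=1$ of the kind you want.

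The missing idea is the regrouping used in the paper (Appendix~\ref{app:slack-sign}). Substitute $\cos\theta_0=-\sqrt p$, $\sin\theta_0=\sqrt{1-p}$ and expand $\cos(2\theta_0-\theta_*)$, $\sin(2\theta_0-\theta_*)$. This gives $F=-2\sin\theta_*\bigl[\delta\,G(p)-2\sqrt{(1-\delta^2)p(1-p)}\,\mathcal H(p)\bigr]$, with $G(p)\ln2=-p\ln p+(1-p)\ln(1-p)$ and $\mathcal H(p)\ln2=-\tfrac12\ln\bigl(p(1-p)\bigr)$.

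Now $\mathcal H>0$ trivially. Also $G<0$ on $(1/2,1)$, since $G$ vanishes at $1/2$ and $1$ and is strictly convex there. So both terms in the bracket are negative, $F>0$, and $\alpha+\beta=F/D<0$, with no comparison of competing terms needed. The limiting equality at $q\to q_z^-$ follows from $G(1)=0$ and $\sqrt{1-p}\,\mathcal H(p)\to0$, which give $F\to0$ while $D$ stays away from zero.

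If you only want the conclusion $\Hcert=\HSh$ on $\Rcert$, there is a shortcut. Cramer gives $\beta=\sin(2\theta_0)\log_2(1-p)/D<0$. The $\Rcert$ inequality at $\theta=0$ reads $\alpha+\beta\delta^2\le h(1)=0$, so $\alpha+\beta\le\beta(1-\delta^2)<0$. This does not, however, give the lemma's claim on the whole rate-positive interior.
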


\begin{proof}
The first two equations in the defining system impose value and slope tangency at the active projective atom, while $\gamma=0$ imposes tightness at the deterministic atom $(0,0)$. The determinant formula in Eq.~\eqref{eq:det-formula} follows from elementary trigonometric identities applied to $\cos^2\theta_0\,\sin\bigl(2(\theta_0-\theta_*)\bigr)-\cos^2(\theta_0-\theta_*)\,\sin(2\theta_0)$, using $\sin\bigl(2(\theta_0-\theta_*)\bigr)-\sin(2\theta_0)=-2\sin\theta_*\,\cos(2\theta_0-\theta_*)$ and $\cos^2(\theta_0-\theta_*)-\cos^2\theta_0=\sin\theta_*\,\sin(2\theta_0-\theta_*)$, then collecting via sum-to-product. The determinant is strictly negative on the rate-positive interior because $\sin\theta_*>0$, $\cos\theta_0<0$ by the branch choice, and $\cos(\theta_0-\theta_*)<0$. The last sign is equivalent to $p>1-\delta^2$, which holds on the interior since $p>1-q_b>1-\delta^2$. Hence the dual coefficients are well-defined.

The slack condition at the deterministic atom $(1,1)$, namely $\alpha+\beta\le0$, is proved in Appendix~\ref{app:slack-sign}. It remains only to check the dual value at the observed marginals. From Lemma~\ref{lem:primal}, $r\cos^2\theta_0=1-q$ and $r\cos^2(\theta_0-\theta_*)=q$. Therefore
\begin{align}
  \alpha(1-q)+\beta q+\gamma
   &= r\bigl[\alpha\cos^2\theta_0+\beta\cos^2(\theta_0-\theta_*)\bigr]+0\notag\\
   &= r\,h(p)
    = \HSh(\delta,q),
\end{align}
where the second equality uses the value-tangency condition. Thus, whenever the global pointwise constraint $L_{\delta,q}(\theta)\le h(\cos^2\theta)$ holds for all $\theta\in[0,\pi)$, the affine functional is a feasible dual witness with value $\HSh(\delta,q)$. Combined with the primal construction of Lemma~\ref{lem:primal}, weak duality gives equality on $\Rcert$. Appendix~\ref{app:certified-region} states the necessary local condition and verifies the global inequality numerically on the operating region used in this paper.
\end{proof}

Figure~\ref{fig:dual-cert} shows the dual witness at $\delta = 0.6$ and the residual $g(\theta) := h(\cos^2\theta) - L_{\delta, q}(\theta)$ at several $\delta$ along honest BPSK.

\begin{figure}
  \centering
  \includegraphics[width=\linewidth]{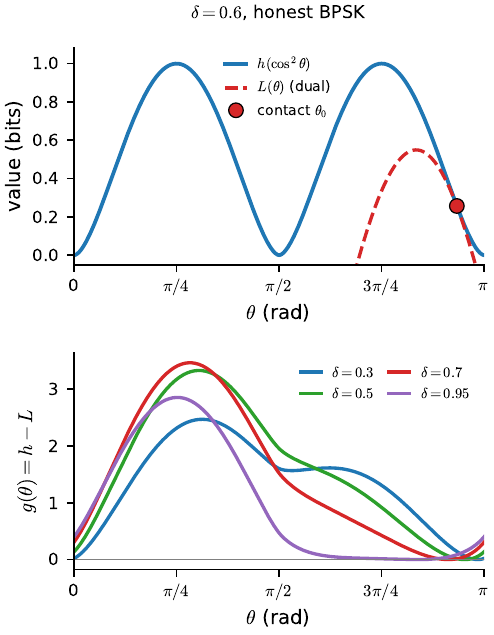}
  \caption{Top: $h(\cos^2\theta)$ (solid) and dual witness $L_{\delta, q}(\theta)$ (dashed) over the full angular range $\theta\in[0,\pi]$ at $\delta = 0.6$ for the honest BPSK observation, using the active branch $\theta_0=\pi-\arccos\sqrt p\in(\pi/2,\pi)$. Filled marker: tangent contact at $\theta_0$. Bottom: residual $g(\theta) = h(\cos^2\theta) - L_{\delta, q}(\theta)$ at $\delta \in \{0.3, 0.5, 0.7, 0.95\}$ for the honest (nominal) BPSK observation.}
  \label{fig:dual-cert}
\end{figure}

\subsection{Optimal adversarial decomposition}

The optimal Eve strategy of Lemma~\ref{lem:primal} mixes a single projective at $\theta_0 \in (\pi/2, \pi)$ with weight $r$ and the deterministic endpoint $(0, 0)$ with weight $1-r$, so in the moment plane $(u, v) = (\langle\psi_0|M_0|\psi_0\rangle, \langle\psi_1|M_0|\psi_1\rangle)$ the observation $(1-q,q)$ lies on the chord from $(\cos^2\theta_0,\cos^2(\theta_0-\theta_*))$ to $(0,0)$ with chord parameter $r$. This one-projective-plus-deterministic decomposition is exactly what the projective-only formula misses, since the projective-only LP instead uses the best two-projective decomposition of the same observation and therefore gives a strictly larger conditional entropy.

\section{Squeezing trade-off}\label{sec:opt-squeeze}

For fixed $\bar n = N + \sinh^2 r$ at lossless detection, the closed form depends on $\bar n$ and $r$ only through $\meff(r) := N(r)\,e^{2r} = (\bar n - \sinh^2 r)\,e^{2r}$. Setting $\partial_r\meff = 0$ gives
\begin{equation}
\begin{aligned}
 & r^*(\bar n) = \tfrac{1}{2}\ln(1 + 2\bar n),
  \\
  & N^*(\bar n) = \frac{\bar n(1+\bar n)}{1+2\bar n},
  \\
  & \meff^*(\bar n) = \bar n(1+\bar n).
  \label{eq:r-star}
\end{aligned}
\end{equation}

$r^*(\bar n)$ is the squeezing that maximises the distinguishability of the two source states in the usual binary state-discrimination sense~\cite{Helstrom1976}. For QRNG randomness extraction the operational direction is the opposite: large $\meff$ means small overlap $\delta = e^{-2\meff}$ and small honest error rate $q_{\rm hon}$, both of which push the operating point toward the strong-feasibility boundary $q_b$ where $\HSh$ is small. Numerical analysis along the honest BPSK locus (see below) gives $\HSh$ monotonically decreasing in $\meff$ over the rate-positive interval, so $r^*$ is in fact the rate-minimising squeezing at fixed $\bar n$. We retain $r^*$ as a reference point because it locates the bottom of the $\HSh$-vs-$r$ curve, but the QRNG-relevant squeezing under the practical constraint $N \ge 0.05$ lies at one of the endpoints.

The lossless operating points and the corresponding lossy rates at $\eta=0.7$ are collected in Table~\ref{tab:summary-rates}.

Figure~\ref{fig:ushape} plots $\HSh$ against $r$ at fixed $\bar n$; the curves are U-shaped with the minimum at $r^*$ and the rate increasing toward both endpoints $r = 0$ and $r \to r_{\max}(\bar n) = \sinh^{-1}\sqrt{\bar n}$, the latter corresponding to $N \to 0$.

\begin{figure}
  \centering
   \includegraphics[width=\columnwidth]{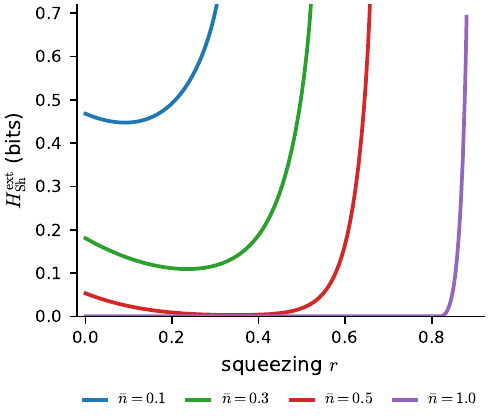}
  \caption{$\HSh$ along the honest BPSK locus vs squeezing $r$, at fixed $\bar n \in \{0.1, 0.3, 0.5, 1.0\}$; lossless source. Each curve runs from $r = 0$ to $r_{\max}(\bar n) = \sinh^{-1}\sqrt{\bar n}$ where $N \to 0$. The regime $\meff < 0.019$ in which $\HSh$ is an upper bound on $\Hcert$ rather than the certified rate occurs only on the extreme high-$r$ tails, below the BPSK-meaningful cutoff $N = 0.05$, and is not used for any certified operating point.}
  \label{fig:ushape}
\end{figure}

Numerical differentiation of $\HSh$ over a $10^3$-point grid in $\meff \in (0, \meff^{(z)})$ with $\meff^{(z)} \approx 0.844$ along the honest BPSK locus gives a strictly negative derivative throughout the grid, including points adjacent to both ends of the positive-rate branch, with limit $\HSh \to ((\pi + 2)/(2\pi))\,h(2/(\pi + 2)) \approx 0.789$ bits as $\meff \to 0^+$. The monotonicity is stated as a numerical observation only; no analytic proof of $d\HSh/d\meff < 0$ is given here.

For $\meff < 0.019$ the closed form is an upper bound on $\Hcert$ and not the certified rate, by Theorem~\ref{thm:main}(E) and Appendix~\ref{app:certified-region}. The relevant regime is at very small $N$ (i.e.\ $\bar n$ dominated by squeezing, $N e^{2r} < 0.019$) and is not reached by any operating point in Table~\ref{tab:summary-rates}.

\section{Loss tolerance and squeezing advantage}\label{sec:loss}

The Gram overlap $\delta = e^{-2 N e^{2r}}$ is a property of the source only and does not depend on $\eta$. The honest error probability degrades with loss according to Eq.~\eqref{eq:q-honest-lossy}, with $\meff^{\rm loss} = \eta N / (\eta e^{-2r} + 1 - \eta) < \meff$ for $\eta < 1$, so the lossy operating point sits at the same $\delta$ as the lossless one but at a larger $q$, hence closer to the rate-zero plateau.

Figure~\ref{fig:phase} plots ${\HSh}^{\max}(\bar n, \eta) := \max_r \HSh$ over the squeezing parameter and the additive squeezing benefit $\Delta H := {\HSh}^{\max} - \HSh|_{r = 0}$, on the $(\bar n, \eta)$ plane, with the optimisation restricted to $N \ge 0.05$ and to points satisfying the dual-feasibility test $m_2(\delta,q;5\times10^{-3})\ge0$.

\begin{table}[t]
  \centering
  \scriptsize
  \setlength{\tabcolsep}{2.2pt}
  \resizebox{\linewidth}{!}{%
  \begin{tabular}{rccccccc}
    \toprule
    $\bar n$
    & $r^{*}$
    & ${\meff}^{*}$
    & ${\delta}^{*}$
    & ${\HSh}^{*}$
    & $\left(\HSh\right)^{\max}_{\eta=1}$
    & $\left.\HSh\right|_{\eta=0.7,r=0}$
    & $\left(\HSh\right)^{\max}_{\eta=0.7}$ \\
    \midrule
    0.1 & 0.091 & 0.110 & 0.802 & 0.448 & 0.516 & 0.300 & 0.300\\
    0.3 & 0.235 & 0.390 & 0.458 & 0.109 & 0.407 & 0.063 & 0.115\\
    0.5 & 0.347 & 0.750 & 0.223 & 0.003 & 0.332 & 0.000 & 0.034\\
    1.0 & 0.549 & 2.000 & 0.018 & 0.000 & 0.199 & 0.000 & 0.000\\
    \bottomrule
  \end{tabular}%
  }
  \caption{Summary of the lossless and lossy operating points, in bits per symbol. The columns $r^*$, ${\meff}^{*}$, ${\delta}^{*}$ and ${\HSh}^{*}$ refer to the distinguishability-maximising squeezing at lossless honest BPSK; the $\bar n=1.0$ row lies on the rate-zero plateau at this point. The column $\left(\HSh\right)^{\max}_{\eta=1}$ is the lossless rate maximised over $r$ under the practical constraint $N\ge0.05$. The last two columns give the lossy no-squeezing baseline and the lossy maximum $\left(\HSh\right)^{\max}_{\eta=0.7}$, with the same $N\ge0.05$ constraint and the dual-feasibility condition $m_2(\delta,q;5\times10^{-3})\ge0$.}
  \label{tab:summary-rates}
\end{table}

\subsection{Source-trust sensitivity}

The trust model of Section~\ref{sec:setup} assumes that the user knows the Gram overlap $\delta$ of the two source states, and this is the minimal source information used by the rate formula: any pair of pure states with the stated overlap is unitarily equivalent to the canonical qubit pair on the source two-dimensional subspace, while the squeezed-coherent BPSK realisation of Section~\ref{sec:setup} is only one way of fixing $\delta$ through Eq.~\eqref{eq:overlap}. The trust assumption is therefore better described as ``pure-state pair with trusted Gram overlap'', rather than loosely as ``full source trust''.

A more conservative source assumption is photon-number trust, where the user trusts the photon-number statistics of the prepared state but not its off-diagonal coherence~\cite{Lasota2023}. As an empirical sensitivity check, we solved the corresponding photon-number-and-phase SDP for $\bar n \in [0.05,0.7]$ and $\eta \in [0.4,1.0]$ and found that the squeezing benefit $\Delta H$ of Section~\ref{sec:loss} disappears, with the certified rate depending on $\bar n$ alone. This scan is reported only as an empirical robustness check, not as a closed-form theorem.

\section{Discussion}\label{sec:discussion}
\begin{figure*}
  \centering
  \includegraphics[width=\linewidth]{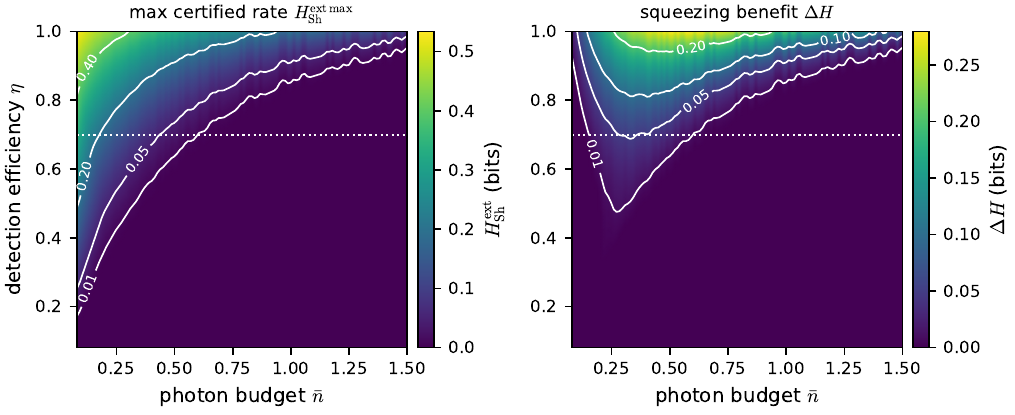}
  \caption{Left: maximum certified rate ${\HSh}^{\max}(\bar n, \eta) := \max_r \HSh$ subject to $N \ge 0.05$ and the dual-feasibility condition $m_2(\delta,q;5\times10^{-3})\ge0$. Right: additive squeezing benefit $\Delta H := {\HSh}^{\max} - \HSh|_{r = 0}$. Honest BPSK observation under Eq.~\eqref{eq:q-honest-lossy}. Dotted line: $\eta = 0.7$.}
  \label{fig:phase}
\end{figure*}
This work gives a closed-form Shannon-rate formula for semi-DI randomness generation with a trusted binary pure-state source and an untrusted binary detector carrying only classical side information. The main point is not just the squeezed-state application, but the fact that the full binary-qubit POVM optimisation can be solved once the deterministic endpoints $M_0=0$ and $M_0=I$ are included. These endpoints are missed by the projective-only treatment, but they change the adversarial decomposition and can reduce the certified rate substantially, as seen already at $\delta=0.5$ and $q=0.15$. The resulting expression $\HSh(\delta,q)$ is therefore the relevant extended-LP rate on the verified operating region, and it gives a direct analytic handle on how the trusted source overlap and the observed error probability determine the extractable Shannon randomness.

For squeezed-coherent BPSK, the formula exposes a simple but useful design rule. Squeezing aligned with the displacement axis increases the distinguishability through $\meff=Ne^{2r}$, but in this semi-DI randomness setting higher distinguishability is not automatically beneficial, because it also drives the honest operating point toward the low-randomness boundary of the feasible region. Thus the distinguishability-maximising squeezing $r^*(\bar n)=\tfrac12\ln(1+2\bar n)$ is not the randomness-maximising choice at fixed energy. This gives the squeezed-state source a clear operational role beyond a communication-style discrimination advantage, since the useful regime is set by the balance between overlap, observed error, loss, and the dual-feasibility condition. The construction is analytic on the primal side and dual-certified on the numerically verified region $\Rcert$, which contains all operating points used here; outside this region the same expression remains a valid upper bound but is not claimed as tight. The assumptions are correspondingly explicit, namely a pure-state source with trusted Gram overlap, verified symmetry of the two BPSK error marginals, and classical detector side information. Within that model the result gives a compact rate formula for squeezed-state semi-DI QRNGs and a basis for the next step, where one can replace the numerical dual check by an analytic certificate, add stronger source constraints, or extend the LP geometry from BPSK to QPSK and general $m$PSK.
\newpage
\appendix

\section{Threat model and adversary scope}\label{app:threat}

\subsection{Classical-$\lambda$ Shannon objective}

Let $\Lambda$ be the classical random variable representing the value of the hidden parameter $\lambda$ that the adversary holds, with distribution $p_\lambda$. Conditional on $X = 0$ and $\Lambda = \lambda$, the device produces $B \in \{0, 1\}$ with $p_{B|0, \lambda}(b) = \langle\psi_0 | M_{b|\lambda} | \psi_0\rangle$, and the conditional entropy of $B$ given $X = 0, \Lambda = \lambda$ is $h(p_{B|0, \lambda}(0))$. Averaging over $\lambda$,
\begin{equation}
  H(B \mid X = 0, \Lambda) = \sum_\lambda p_\lambda\,h\!\bigl(\langle\psi_0|M_{0|\lambda}|\psi_0\rangle\bigr),
\end{equation}
and an analogous identity for $X = 1$. The certified asymptotic i.i.d.\ Shannon rate against the classical-$\lambda$ adversary considered in this paper is the minimum of the $X=0$ quantity over $\{p_\lambda, M_{b|\lambda}\}$ subject to Eq.~\eqref{eq:marginals}. By the symmetry $X \leftrightarrow 1-X$ the corresponding one-input optimisation for $X=1$ has the same value, but the averaged quantity $H(B|X,\Lambda)$ is a different optimisation and is not solved here. No environment register is introduced: $\Lambda$ is classical, and the side information is the label $\lambda$ only.

\subsection{Detector-purification adversary}

A strictly stronger adversary holds the purifying environment of the binary detector rather than only the classical label $\lambda$. The detector is then described by an isometry $V: \mathbb{C}^2 \to \mathbb{C}^2 \otimes \mathcal{H}_E$ with $V^\dagger V = I$ and
\begin{equation}
  V\ket{\psi_x} = \sum_b K_b\ket{\psi_x}\otimes\ket{e_b}_E,
  \quad K_b = \sqrt{M_b},
\end{equation}
where $\{\ket{e_b}_E\}$ is an orthonormal basis of an outcome-marking environment factor. The bipartite outcome--environment state
\begin{equation}
  \sigma_{BE\mid X = x} = \sum_b p_{B|x}(b)\,\ket b\bra b_B\otimes\ket{e_b}\bra{e_b}_E
\end{equation}
has $H(B \mid E, X = x) = 0$ for any input $x$ and any binary POVM $\{M_b\}$, since outcome and environment are perfectly correlated. Eve holding $E$ recovers $B$ perfectly and the certified rate is zero on all observations.

The two cases are distinct adversary models. The detector-purification adversary has access to an orthogonal outcome-tagging quantum register that does not exist in the classical-$\lambda$ model; including such a register in the classical-$\lambda$ analysis would conflate them. The closed form of Theorem~\ref{thm:main} is a randomness guarantee against the classical-$\lambda$ adversary only. The trust assumption that distinguishes the two cases is that side information is classical.

\section{Strong feasibility and the honest BPSK locus}\label{app:strong-feas}

The LP marginals $(u, v) := (\langle\psi_0|M_0|\psi_0\rangle, \langle\psi_1|M_0|\psi_1\rangle)$ for a projective at angle $\theta$ are $u = \cos^2\theta$, $v = \cos^2(\theta - \theta_*)$ with $\theta_* = \arccos\delta$. The locus of feasible $(u, v)$ as $\theta$ runs over $[0, 2\pi)$ is the moment ellipse
\begin{equation}
  u^2 + v^2 + 2(1 - 2\delta^2)\,uv - 2(1 - \delta^2)(u + v) + (1 - \delta^2)^2 = 0,
  \label{eq:ellipse}
\end{equation}
equivalently in centred coordinates $u' := 2u - 1$, $v' := 2v - 1$,
\begin{equation}
  u'^2 + v'^2 - 2(2\delta^2 - 1)\,u'v' = 4\delta^2(1 - \delta^2).
  \label{eq:ellipse-centred}
\end{equation}
Equation~\eqref{eq:ellipse-centred} is obtained by substituting $u' = \cos(2\theta)$, $v' = \cos(2\theta - 2\theta_*)$, using $v' = \cos(2\theta_*)\,u' + \sin(2\theta_*)\sin(2\theta)$ and eliminating $\sin(2\theta)$ via $\sin^2(2\theta) = 1 - u'^2$, with $\cos(2\theta_*) = 2\delta^2 - 1$ and $\sin(2\theta_*) = 2\delta\sqrt{1 - \delta^2}$.

The deterministic endpoints $(0, 0)$ and $(1, 1)$ lie outside the ellipse on the line $v = u$. The feasibility region for the binary qubit POVM is the convex hull of the ellipse and the two components. Substituting $v = 1 - u$ (the BPSK chord) into Eq.~\eqref{eq:ellipse} gives the quadratic $u^2 - u + \delta^2/4 = 0$ with solutions $u = (1 \pm \sqrt{1 - \delta^2})/2$. The honest observation $(1 - q, q)$ is feasible iff
\begin{equation}
  |1 - 2q| \le \sqrt{1 - \delta^2},
  \qquad\text{i.e.,}\ q_b \le q \le 1 - q_b,
\end{equation}
with $q_b$ as in Eq.~\eqref{eq:qb}.

To verify that the honest BPSK locus $\delta = e^{-2\meff}$, $q = (1 - \mathrm{erf}\sqrt{2\meff})/2$ lies strictly inside the strong-feasibility interval, set $x := \sqrt{2\meff}$ so that $\delta = e^{-x^2}$ and $1 - 2q = \mathrm{erf}(x)$. The condition becomes
\begin{equation}
  \mathrm{erf}^2(x) \le 1 - e^{-2x^2}.
  \label{eq:erf-vs-exp}
\end{equation}
Define $s(x) := 1 - e^{-2x^2} - \mathrm{erf}^2(x)$. Then $s(0) = 0$ and $s(x) \to 0$ as $x \to \infty$. Differentiating,
\begin{equation}
  s'(x) = 4 e^{-x^2}\,g(x), \qquad g(x) := x e^{-x^2} - \frac{\mathrm{erf}(x)}{\sqrt\pi}.
\end{equation}
We have $g(0) = 0$ and
\begin{equation}
  g'(x) = e^{-x^2}\Bigl(1 - \tfrac{2}{\pi} - 2x^2\Bigr),
\end{equation}
which vanishes at $x_c := \sqrt{(\pi - 2)/(2\pi)} \approx 0.4262$, with $g' > 0$ on $(0, x_c)$ and $g' < 0$ on $(x_c, \infty)$. Hence $g$ rises from $g(0) = 0$ to its maximum at $x_c$ and decreases thereafter, with $g(x) \to -1/\sqrt\pi$ as $x \to \infty$. By the intermediate-value theorem $g$ has a single positive zero $x_+$ on $(x_c, \infty)$, computed numerically as $x_+ \approx 0.806$. On $(0, x_+)$, $g > 0$ hence $s' > 0$; on $(x_+, \infty)$, $g < 0$ hence $s' < 0$. Combined with $s(0) = s(\infty) = 0$, this gives $s > 0$ on $(0, \infty)$, which is Eq.~\eqref{eq:erf-vs-exp} strictly. The critical point $x_c$ of $g$ and the positive zero $x_+$ are distinct: $g$ reaches its maximum at $x_c$ and crosses zero at $x_+$.

\section{Analytic proof of $\alpha + \beta \le 0$}\label{app:slack-sign}

We work in the convention of Lemma~\ref{lem:dual}: active deterministic endpoint $(0, 0)$ (so $\gamma = 0$), slack at $(1, 1)$ to be established as $\alpha + \beta \le 0$, with $p = \cos^2\theta_0 \in (1 - q_b, 1) \subset (1/2, 1)$ on the rate-positive interior.

From Cramer's rule applied to Eqs.~\eqref{eq:dual-val}, \eqref{eq:dual-slope} and the determinant $D$ of Eq.~\eqref{eq:det-formula},
\begin{equation}
  \alpha + \beta = \frac{F(\theta_0, \theta_*)}{D},
\end{equation}
with
\begin{align}
  F &= h(p)\,\bigl[\sin(2(\theta_0 - \theta_*)) - \sin(2\theta_0)\bigr] \notag\\
  &\quad+ h'(p)\sin(2\theta_0)\bigl[\cos^2\theta_0 - \cos^2(\theta_0 - \theta_*)\bigr].
\end{align}
Using $\sin(2(\theta_0 - \theta_*)) - \sin(2\theta_0) = -2\sin\theta_*\cos(2\theta_0 - \theta_*)$ and $\cos^2\theta_0 - \cos^2(\theta_0 - \theta_*) = -\sin\theta_*\,\sin(2\theta_0 - \theta_*)$, then $\sin(2\theta_0) = -2\sqrt{p(1-p)}$ (the branch sign of Eq.~\eqref{eq:theta-0}),
\begin{equation}
\begin{split}
  F
  &= -2\sin\theta_*\,h(p)\cos(2\theta_0-\theta_*) \\
  &\quad
  +2\sqrt{p(1-p)}\,\sin\theta_*\,h'(p)
  \sin(2\theta_0-\theta_*) \\
  &=
  -2\sin\theta_*
  \bigl[
    h(p)\cos(2\theta_0-\theta_*) \\
  &\qquad\qquad
    -\sqrt{p(1-p)}\,h'(p)\sin(2\theta_0-\theta_*)
  \bigr].
\end{split}
\end{equation}

Expanding $\cos(2\theta_0 - \theta_*) = (2p - 1)\delta - 2\sqrt{p(1-p)}\sqrt{1 - \delta^2}$ and $\sin(2\theta_0 - \theta_*) = -2\sqrt{p(1-p)}\,\delta - (2p - 1)\sqrt{1 - \delta^2}$ for $\theta_0 = \pi - \arccos\sqrt p$, and collecting,
\begin{equation}
  F = -2\sin\theta_*\bigl[\delta\,G(p) - 2\sqrt{(1-\delta^2)\,p(1-p)}\,\mathcal H(p)\bigr],
\end{equation}
with
\begin{align}
  G(p) &:= (2p - 1)\,h(p) + 2p(1-p)\,h'(p),\\
  \mathcal H(p) &:= h(p) - \tfrac{1}{2}(2p - 1)\,h'(p).
\end{align}

In nats, $h_{\rm nat}(p) = -p\ln p - (1-p)\ln(1-p)$, $h_{\rm nat}'(p) = \ln((1-p)/p)$, and direct computation gives
\begin{align}
  G(p)\ln 2 &= -p\ln p + (1-p)\ln(1-p),\\
  \mathcal H(p)\ln 2 &= -\tfrac{1}{2}\ln\bigl(p(1-p)\bigr).
\end{align}
The function $\widetilde G(p) := G(p)\ln 2$ is antisymmetric about $p = 1/2$: $\widetilde G(1 - p) = -\widetilde G(p)$. It vanishes at $p = 0$, $p = 1/2$, $p = 1$, and its second derivative is $\widetilde G''(p) = (2p - 1)/(p(1 - p))$, positive on $(1/2, 1)$ and negative on $(0, 1/2)$. Hence $\widetilde G$ is strictly convex on $(1/2, 1)$, vanishes at both endpoints of that subinterval, and is strictly negative there. So $G(p) < 0$ on the rate-positive interior $p \in (1 - q_b, 1) \subset (1/2, 1)$.

The function $\mathcal H(p)$ is symmetric: $\mathcal H(1 - p) = \mathcal H(p)$. It is strictly positive on $(0, 1)$ because $p(1-p) \le 1/4 < 1$ gives $-\ln(p(1-p)) > 0$. So $\mathcal H(p) > 0$ on the rate-positive interior.

Putting the signs together: $\sin\theta_* > 0$, $\delta > 0$, $\sqrt{(1-\delta^2)p(1-p)} > 0$, $G(p) < 0$, $\mathcal H(p) > 0$, so $\delta G(p) - 2\sqrt{(1-\delta^2)p(1-p)}\,\mathcal H(p) < 0$, hence $F > 0$. The determinant $D = -2\sin\theta_*\,\cos\theta_0\,\cos(\theta_0 - \theta_*)$ in Eq.~\eqref{eq:det-formula} has $\sin\theta_* > 0$, $\cos\theta_0 = -\sqrt p < 0$, and $\cos(\theta_0 - \theta_*) = -\delta\sqrt p + \sqrt{(1-p)(1-\delta^2)}$; for $p \in (1 - q_b, 1)$ the first term dominates and $\cos(\theta_0 - \theta_*) < 0$. Hence $D < 0$ strictly. Combining, $\alpha + \beta = F/D < 0$ strictly on the rate-positive interior.

At the rate-zero edge $q \to q_z^-$, $p \to 1$, $G(1) = 0$ and $\sqrt{p(1-p)}\,\mathcal H(p) \to 0$ (the $\sqrt{1-p}$ dominates the logarithmic divergence of $\mathcal H$), so $F \to 0$. Here $r \to (1+\delta^2)^{-1}$, not zero, so the dual value $r\cdot h(p)$ vanishes only because $h(p) \to h(1) = 0$, while $\alpha + \beta \to 0^-$ and the deterministic slack becomes tight at the plateau boundary.

The mirror $q \in (1 - q_z, 1 - q_b)$ uses active component $(1, 1)$ and the slack condition $\gamma \le 0$ instead; the proof is identical under $p \to 1 - p$ on the antisymmetric $G$ and symmetric $\mathcal H$.

\section{The dual-feasibility region}\label{app:certified-region}

The dual-feasibility region of Theorem~\ref{thm:main}(E) is the set of $(\delta, q)$ such that the residual
\begin{equation}
  g(\theta; \delta, q) := h(\cos^2\theta) - L_{\delta, q}(\theta)
\end{equation}
is non-negative for every $\theta \in [0, \pi)$. By construction $g(\theta_0) = 0$ and $g'(\theta_0) = 0$ (value and slope tangency at the active projective angle), so the tangent point $\theta_0$ contributes a double zero to the residual; this zero is present everywhere in $\Rcert$ and does not characterise the boundary. Inside the interior of $\Rcert$, $g$ is strictly positive at every $\theta \neq \theta_0$; on $\partial\Rcert$, $g$ acquires an additional tangent zero at some $\theta_1 \neq \theta_0$ (where the second extremum of $g$ first touches zero from above); beyond $\partial\Rcert$, $g$ is strictly negative on a neighbourhood of $\theta_1$.

For numerical detection, fix a small exclusion radius $\varepsilon > 0$ around $\theta_0$ and compute
\begin{equation}
  m_2(\delta, q;\varepsilon) := \min_{\theta\,:\,|\theta - \theta_0| > \varepsilon}\,g(\theta; \delta, q).
\end{equation}
For $\varepsilon$ smaller than half the distance from $\theta_0$ to the nearest critical point of $g$ outside the tangent neighbourhood, $m_2 > 0$ strictly in the interior of $\Rcert$, $m_2 = 0$ on $\partial\Rcert$, and $m_2 < 0$ in $\Rcert^c$; the boundary location is insensitive to $\varepsilon$ in this range. The numerical scan reported below uses $\varepsilon = 5 \times 10^{-3}$, which is well below the observed separation between $\theta_0$ and the secondary critical point of $g$ for every $(\delta, q)$ tested.

Substituting $u = \cos(2\theta)$, $v = \sin(2\theta)$ into the candidate witness gives $L_{\delta, q}(\theta) = c_0 + c_1\,u + c_2\,v$ with
\begin{equation}
  c_0 = \tfrac{\alpha + \beta}{2} + \gamma,
  \quad
  c_1 = \tfrac{\alpha}{2} + \tfrac{\beta}{2}\cos(2\theta_*),
  \quad
  c_2 = \tfrac{\beta}{2}\sin(2\theta_*).
  \label{eq:c012}
\end{equation}
Define the tangent line to $\Phi(u) := h((1+u)/2)$ at $u_0 = 2p - 1$ by $\ell_*(u) := \Phi(u_0) + \Phi'(u_0)(u - u_0)$. Combining the tangency conditions at $\theta_0$ with $\gamma = 0$, one obtains
\begin{equation}
  L_{\delta, q}(\theta) = \ell_*\bigl(u(\theta)\bigr) - \frac{2 c_2}{v_0}\,\sin^2(\theta - \theta_0),
  \label{eq:L-as-tangent}
\end{equation}
with $v_0 := \sin(2\theta_0)$. In the active branch of Eq.~\eqref{eq:theta-0}, $v_0 = -2\sqrt{p(1-p)} < 0$; the dual coefficient $c_2 = (\beta/2)\sin(2\theta_*)$ has $\beta < 0$ at every rate-positive point (verified directly from the linear system of Lemma~\ref{lem:dual}), so $c_2 < 0$ as well, and the ratio $2c_2/v_0 > 0$.

Strict concavity of $\Phi$ identifies the tangent--curve gap with the binary Kullback--Leibler divergence in bits~\cite{CoverThomas2006}:
\begin{equation}
  \ell_*(u) - \Phi(u) = \dKL\bigl((1+u)/2\,\big\|\,p\bigr).
\end{equation}
The pointwise inequality $L_{\delta, q}(\theta) \le h(\cos^2\theta)$ is therefore equivalent to
\begin{equation}
  \dKL\bigl(\cos^2\theta\,\big\|\,p\bigr) \le \frac{2 c_2}{v_0}\,\sin^2(\theta - \theta_0)\quad \forall\,\theta \in [0, \pi),
  \label{eq:K-le-R}
\end{equation}
with both sides non-negative on the rate-positive interior (the ratio $2c_2/v_0$ being positive by the sign analysis above). Both sides vanish to second order at $\theta = \theta_0$, with local Taylor ratio
\begin{equation}
  \lim_{\theta \to \theta_0}\frac{\dKL\bigl(\cos^2\theta\,\big\|\,p\bigr)}{\sin^2(\theta - \theta_0)} = \frac{2}{\ln 2}.
\end{equation}
\begin{figure}
  \centering
  \includegraphics[width=\columnwidth]{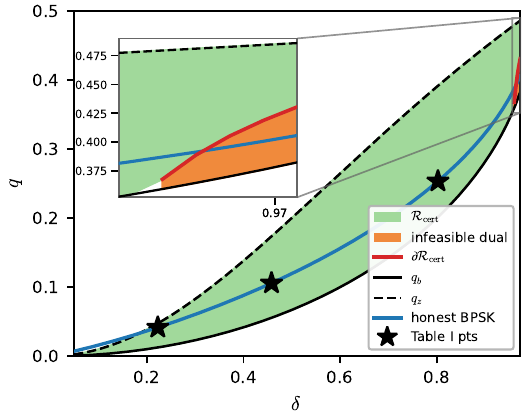}
    \caption{Dual-feasibility region $\Rcert$ (green) and its complement in the rate-positive interior (orange). The boundary $\partial\Rcert$ (red) is the numerically computed curve $m_2(\delta, q) = 0$ with the tangent-contact neighbourhood excluded. Curves: $q_b$ strong-feasibility lower edge (solid black), $q_z$ rate-zero upper edge (dashed black), honest BPSK $q = (1 - \mathrm{erf}\sqrt{2\meff})/2$ at $\delta = e^{-2\meff}$ (blue). Black stars: lossless operating points from Table~\ref{tab:summary-rates} at $\bar n \in \{0.1, 0.3, 0.5\}$. Inset: zoom near the high-$\delta$ boundary, showing where the honest BPSK locus exits $\Rcert$ close to $\delta_c\simeq0.963$.}
  \label{fig:Rcert}
\end{figure}

The local Taylor expansion gives the necessary condition
\begin{equation}
\frac{2 c_2}{v_0} \ge \frac{2}{\ln 2}, \quad\text{i.e.,}\quad c_2\ln 2 \le v_0,
  \label{eq:local-cond}
\end{equation}
where the inequality is between signed quantities and equality marks the boundary of local feasibility. In the active branch with $v_0 < 0$ and $c_2 < 0$, the condition is equivalent to $|c_2|/|v_0| \ge 1/\ln 2$. Equation~\eqref{eq:local-cond} is necessary but not sufficient for Eq.~\eqref{eq:K-le-R}; non-local features of $g$ outside the tangent neighbourhood can violate the global inequality even when the local condition holds.

The global inequality has been verified numerically on a $(\delta, q)$ grid of spacing $\Delta\delta = \Delta q = 0.01$ across $\delta \in [0.05, 0.95]$, $q \in (q_b, q_z)$, evaluating $g$ on a $2 \times 10^4$-point grid in $\theta$ and excluding a small neighbourhood of the tangent contact $\theta_0$; the worst $m_2$ is non-negative at the $10^{-7}$ level. A rigorous Lipschitz-grid certificate based on an explicit bound on $|g''(\theta)|$ would convert this into an analytic lower bound on $m_2$ in the spirit of \cite{Boyd2004}; we do not carry out the bound here. Accordingly we describe $\Rcert$ as \emph{numerically verified} at the precision stated, not analytically certified, and leave the analytic certificate for future work.

For honest BPSK with $\delta = e^{-2\meff}$ and $q = (1 - \mathrm{erf}\sqrt{2\meff})/2$, the global inequality~\eqref{eq:K-le-R} first fails at $\delta_c \approx 0.963$ (i.e.\ $\meff \approx 0.019$); the necessary local condition~\eqref{eq:local-cond} alone continues to hold out to $\delta \approx 0.98$, so it is the non-local violation away from $\theta_0$ that sets the boundary. The numerically observed boundary of $\Rcert$ along the honest locus coincides with $\delta_c$ to the grid precision. Below $\delta_c$ the closed form is the certified rate; above $\delta_c$ it is strictly larger than $\Hcert$. The regime $\delta > \delta_c$ corresponds to $\meff < 0.02$, well below any operating point in Table~\ref{tab:summary-rates}.
Figure~\ref{fig:Rcert} shows the resulting dual-feasibility region $\Rcert$ together with its numerically obtained boundary in the $(\delta,q)$ plane.

Inside $\Rcert$ the optimal adversary uses one projective component and the deterministic endpoint $(0, 0)$, the structure of Lemma~\ref{lem:primal}: the dual is feasible at every $\theta$, weak duality combined with primal achievability forces equality, and the closed form equals $\Hcert$. Outside $\Rcert$, the optimal adversary uses two projective components with no deterministic endpoint; this is a structurally different LP solution that the closed form is not built on, and the closed form is then a strict upper bound. The transition between the two regimes is sharp, occurring at $m_2 = 0$.

\bibliography{Ref}

\end{document}